\newtheorem{theorem}{Theorem}
\title{Energy Efficient Joint Resource Allocation and Power Control for D2D Communications}
\author{\authorblockN{Yanxiang~Jiang,~\IEEEmembership{Member,~IEEE}, Qiang~Liu, Fuchun~Zheng,~\IEEEmembership{Senior Member,~IEEE}, Xiqi~Gao,~\IEEEmembership{{Fellow},~IEEE}, and Xiaohu~You,~\IEEEmembership{Fellow,~IEEE}}
\thanks{\small {Copyright (c) 2015 IEEE. Personal use of this material is permitted. However, permission to use this material for any other purposes must be obtained from the IEEE by sending a request to pubs-permissions@ieee.org. }}
\thanks{\small {Manuscript received April 3, 2014; revised November 25, 2014, and May 11, 2015; accepted August 22, 2015.}
This work was supported in part by the National Basic Research Program of China (973 Program) under grant 2012CB316004,  the National High-Tech 863
Program under grant 2015AA01A709, the Natural Science Foundation of China (NSFC) under grant 61221002, and the National Key Project under grant 2012ZX03003013-004.
The review of this paper was coordinated by Dr. F. Ban.}
\thanks{\small Y. Jiang, X. Gao, and X. You are with the National Mobile Communications Research Laboratory,
Southeast University, Nanjing 210096, China (e-mail: \{yxjiang, xqgao, xhyu\}@seu.edu.cn).}
\thanks{\small Q. Liu is with the Tangshan Branch, Hebei Co., Ltd., China Mobile Communications Group (e-mail: {liuqiang1\_ts}@he.chinamobile.com).}
\thanks{\small F. Zheng is with the School of Systems Engineering, University of
Reading, Reading, RG6 6AY, UK (e-mail: f.zheng@reading.ac.uk).}
}
\begin{document}
\maketitle

\begin{abstract}
In this paper, joint resource allocation and power control for energy efficient device-to-device (D2D) communications underlaying cellular networks are investigated.
The resource and power are optimized for maximization of the energy efficiency (EE) of D2D communications.
Exploiting the properties of  fractional programming, we transform the original  nonconvex optimization problem in fractional form into
an equivalent optimization problem in subtractive form. Then, an efficient iterative resource allocation and power control scheme is proposed.
In each iteration, part of the constraints of the EE optimization problem is removed by exploiting the penalty function approach.
{We further propose a novel two-layer approach which allows to find the optimum at each iteration by decoupling the EE optimization problem of
joint resource allocation and power control into two separate steps.}
In the first layer, the optimal power values are obtained by solving a series of maximization problems through root-finding with or without considering
the loss of cellular users' rates.
In the second layer, {the formulated optimization problem belongs to a classical resource allocation problem with
single allocation format which admits a network flow formulation so that it can be solved to optimality.}
Simulation results demonstrate the remarkable improvements in terms of EE by using the proposed iterative resource allocation and power control scheme.
\end{abstract}

\begin{keywords}
 Energy efficiency, resource allocation, power control, fractional programming, penalty function, device-to-device.
\end{keywords}

\section{Introduction}
Device-to-device (D2D) communications underlaying cellular networks have the potential of increasing spectrum efficiency (SE) and energy efficiency (EE) as well as allowing new peer-to-peer services by taking advantage of the so called proximity and reuse gains \cite{Doppler}. Because D2D users share the same spectrum with cellular users (CUs), sophisticated resource allocation and power control need to be performed to achieve high SE and EE.

Many works, so far, have been carried out on resource allocation to improve throughput and reduce the interference between D2D and cellular connections {\cite{Wang, Min, Xu}.} The authors in \cite{Wang}  proposed to limit the minimum distance between CUs and D2D users reusing the same resources to mitigate the interference from cellular transmission to D2D links. A $\delta_D$-interference limited area control scheme was proposed to manage the interference from CUs to D2D users in \cite{Min}. For the schemes in \cite{Wang} and \cite{Min}, the transmit power of D2D users is assumed to be fixed. The authors in \cite{Xu} optimized the system throughput by introducing a sequential second price auction as the resource allocation mechanism, and much channel information need to be exchanged between D2D users and the base station (BS).The above research mainly focuses on improving the performance by using resource allocation.

Apart from resource allocation, power control is also a key technique to achieve high performance {\cite{Yu, Yu1,Fodor, Xiao, FodorDella}.}
The authors in \cite{Yu} applied a simple power control method to D2D communications, which constrains the signal to interference-plus-noise ratio (SINR) degradation of the cellular link to a certain level. In \cite{Yu1}, the optimum power allocation for different resource sharing modes was derived. In \cite{Fodor}, the authors devised a new distributed power control algorithm, which iteratively determines the SINR targets and allocates transmit power so that the overall power consumption is minimized subject to a sum-rate constraint. For the algorithm in \cite{Fodor}, a near optimal solution was obtained which requires the slow changing path loss and shadowing matrix knowledge at each transmitter. In \cite{Xiao}, a heuristic scheme  was proposed to minimize the downlink transmit power constrained by the demands of users' quality of service {(QoS)}. {In \cite{FodorDella}, the authors investigated the performance of various power control strategies applicable to D2D communications in long term evolution (LTE) networks, and gained valuable insight by quantifying the performance with respect to a utility function maximization approach.} The above research mainly focuses on improving the performance of cellular link or D2D link by using power control.

\pubidadjcol

{Many wireless systems of higher capacity are generally designed to improve SE \cite{ShuYou}.
While the power consumption of mobile devices increases to meet the increasing demand from multimedia applications.}
In contrast, the improvement in battery technology has been very slow, leading to an exponentially increasing gap between the required battery capacity and the available battery capacity.
Therefore, energy efficient system designs which adopt EE as the performance metric have recently draw much attention in both industry
and academia.
A power-efficient mode selection and power allocation scheme was proposed in \cite{Jung}, which was performed based on the exhaustive search of all possible mode combinations of the D2D users.
In \cite{Qiu}, energy efficient power control schemes in three different resource sharing modes were discussed under the maximum transmission power constraint.
{In \cite{Belleschi}, the authors proposed a global optimization model that takes into account mode selection, resource assignment and power allocation in a unified framework with
the aim of minimizing the overall power consumption for D2D communications underlaying a cellular infrastructure.}
{The authors in \cite{Meshkati} presented a number of noncooperative power control games for resource allocation in code-division multiple-access (CDMA) networks
with emphasis on EE, where the users can choose their uplink receivers, transmission rates, and carrier allocation strategies in addition to choosing their transmit powers.
In \cite{Miao}, low-complexity closed-form power control and resource allocation schemes were developed by using a time-averaged bits-per-Joule metric for uplink orthogonal frequency
division multiple access (OFDMA) systems in frequency-selective channels.}
To the best of our knowledge, for the existing research on D2D communications, there
is no study about joint resource allocation and power control for maximization of the EE.

Motivated by the aforementioned reviews, we formulate the resource allocation and power control problem for energy efficient D2D
communications as a nonconvex optimization problem. Exploiting the properties of fractional programming, we transform the considered nonconvex optimization
problem in fractional form into an equivalent optimization problem in subtractive form with a tractable  solution,  which
can be obtained with an iterative approach.
In each iteration, part of the constraints of the equivalent optimization problem is removed by exploiting the
penalty function approach. Then, a two-layer resource allocation and power control scheme is developed to maximize the EE.
The proposed scheme converges fast to the optimal solution in our considered simulation scenario.

The rest of the paper is organized as follows. In Section II, the system model of D2D communications underlaying cellular networks is presented and the corresponding EE optimization problem is formulated. In Section III, the joint resource allocation and power control scheme is developed. Simulation results are shown in Section IV. Final conclusions are drawn in Section V.

\section{System Model and Problem Formulation}

In this paper, the D2D communications operate as an underlay of the general
cellular network which enables local services by a D2D radio. The general cellular
network may be LTE network. The D2D operation itself can be fully transparent to
the users, and the D2D devices are the general user equipments in the cellular
network.
A single cell scenario with multiple CUs and D2D pairs is considered as illustrated in Figure 1. Assume that the uplink resources of  CUs are allowed to be reused by  D2D pairs, and that all the CUs and all the D2D users are randomly distributed in the cell \cite{Xu1}.

\begin{figure}[!t]
\centering 
\includegraphics[width=0.4\textwidth]{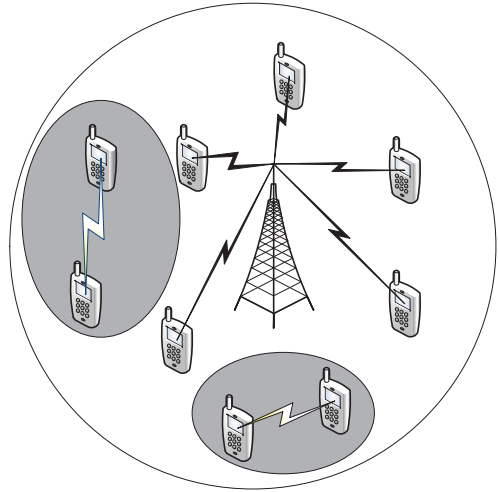}
\captionstyle{mystyle2}
\caption{Schematic of the considered D2D single cell.}
\label{fig1}
\end{figure}

Let $\mathbb{L}=\{1,2,...,N_c\}$, $\mathbb{M}=\{1,2,...,M\}$, and $\mathbb{N}=\{1,2,...,N_d\}$  denote the sets of the CUs, the resource blocks (RBs), and the D2D pairs  in the considered single cell, respectively. $N_c$, $M$, and $N_d$ should satisfy $N_d \le N_c = M$.
Here, one RB refers to one time-frequency resource block which includes 1 time slots  and 12 subcarriers.
Assume that resource allocation of the CUs has already been accomplished, and that the $j$-th RB is allocated to the $k$-th CU. Let {$P_{k,B}$} denote the average received power at the BS from {the $k$-th CU.} Then, it can be expressed as \cite{Min}
\begin{equation}
{\ P_{k,B} = \kappa P_{k,j}(d_{k,B})^{-\chi}}
\end{equation}
where $P_{k,j}$ denotes the transmit power of the $k$-th CU on the $j$-th RB, $d_{k,B}$ denotes the distance between the $k$-th CU and the BS, $\kappa$ and $\chi$ denote the path loss constant and  path loss exponent, respectively.

Assume that the $j$-th RB allocated to the $k$-th CU is reused by the $i$-th D2D pair. Let $P_{i,j}$ denote the transmit power of the transmitter of the $i$-th D2D pair on the $j$-th RB. Let $R(P_{i,j})$ denote the data rate of the $i$-th D2D pair on the $j$-th RB. Then, $R(P_{i,j})$ can be expressed as follows
\begin{equation}
R(P_{i,j})=W\log_{2}(1+\frac{P_{i,j}h_{i,i}}{P_{k,j}h_{k,i}+N_0})
\end{equation}
where $W$ denotes the bandwidth of the considered RB, $h_{i,i}$ denotes the channel gain from the transmitter to the receiver of the $i$-th D2D pair, $h_{k,i}$ denotes the interference channel gain from the $k$-th CU to the receiver of the $i$-th D2D pair, and $N_0$ denotes the noise power.

In order to achieve the objective of green communications and to transmit more bits on unit Joule under the constraint of minimum rate requirement, we use the EE of all the D2D users as the optimization objective. The same choice can be found in the literature, for example, the references \cite{Ng,Xiao1}. To choose the total EE of all the D2D users, we pay more attention to all the D2D users as a whole.
Let $P_C$ denote the circuit power consumption of all the considered D2D pairs, and it is a power offset that is independent of radiated power, but derived from signal processing, and battery backup, etc. Let $U_{EE}$ denote the EE of D2D communications, and it can be defined as follows
\begin{equation}
U_{EE} = \frac{\sum\limits _{i=1} ^{N_d}\sum\limits _{j=1} ^{M}x_{i,j}R(P_{i,j})}{\sum\limits _{i=1} ^{N_d}\sum\limits _{j=1} ^{M}x_{i,j}P_{i,j}+P_C}
\end{equation}
where $x_{i,j}\in\{0,1\}$ indicates whether the $j$-th RB is allocated to the $i$-th D2D pair.
Then, the EE optimization problem can be mathematically formulated as follows
\begin{subequations}\label{eq:parent}
\begin{align}\label{fz}
& \qquad\  \max\limits_{x_{i,j},P_{i,j}}\ U_{EE} \tag{4}\\
\text{s.t.} &\ x_{i,j}\in\{0,1\},\quad\forall{i}\in\mathbb{N},\forall{j}\in\mathbb{M} \\
&\ \sum\limits _{j=1} ^{M}x_{i,j}=1,\quad\forall{i}\in\mathbb{N}\\
&\ \sum\limits _{i=1} ^{N_d}x_{i,j}\leq1,\quad\forall{j}\in\mathbb{M}\\
&\  R(P_{i,j})\geq\gamma_{i},\quad\forall{x_{i,j}=1}\\
&\  P_{k,j}h_{k,i}\leq\tau,\quad\forall{x_{i,j}=1}\\
&\  P_{i,j}\leq P_{max},\quad\forall{x_{i,j}=1}
\end{align}
\end{subequations}
where $\gamma_{i}$ denotes the minimum rate requirement of each D2D pair, $\tau$ denotes the threshold of the allowed interference from the considered CU sharing the same resource with the corresponding D2D pair, and it is set to guarantee that the interference between D2D users and cellular users will not be too serious, $P_{max}$ denotes the maximum transmit power of the transmitter of {each} D2D pair, and we assume that the maximum transmit power of of the transmitter of each D2D pair is the same. Note that a smaller maximum transmit power of D2D pairs may reduce the energy efficiency of D2D pairs and also reduce the interference to the CUs, and that (4b) and (4c) mean that each D2D pair can only reuse one RB and each RB can not be reused by more than one D2D pair.
{Note that all the subcarriers included in one RB are characterized by the same channel gains in our formulated EE optimization problem, and hence no single-user diversity is exploited. However, the assumption of constant channel gain in the allocated subcarriers in one RB makes the power control problem more practical, despite it representing a suboptimal solution.}
{It can also be seen from the above EE optimization problem that there is no constraint on the interference produced towards CUs from D2D users.
On the one hand, in a real scenario, the transmitter and the receiver of each D2D pair are generally located close,
and it is also constrained through (4f) that the maximum transmit power of the transmitter of the $i$-th D2D pair is not larger than $P_{max}$.
Correspondingly, relatively small $P_{i,j}$ is enough to guarantee reliable and efficient communications between the D2D users of each D2D pair.
On the other hand, we have also assumed that the uplink resources of  CUs are reused by D2D pairs.
Correspondingly, the BS will be affected by the interference generated from  D2D users due to $P_{i,j}$. 
Generally speaking, the interference level from D2D users tends to be very low due to the low transmit power of D2D transmitters and the distance to the BS.
Also, the BS is normally more interference resistant than  D2D receivers (e.g. due to more advanced interference rejection algorithms).
Therefore, we do not set the constraint on the interference produced towards  CUs from  D2D users.
}

\section{Proposed Joint Resource Allocation and Power Control Scheme}

Given the integer assignment variables $x_{i,j}$, the optimization problem above falls into the scope of combinatorial programming which is NP-hard \cite{Xiao1}.
The fractional objective function in (4), which is a nonconvex function, makes the problem even more complicated.
A brute force approach is generally required to obtain a global optimal solution. However, such a method has an exponential complexity with respect to
(w.r.t.) the number of RBs, and it is computationally impracticable even for a small size system. Therefore, we commit to an effective method to solve
this challenging problem.

\subsection{Problem Equivalence}
We first treat the fractional objective in (4) which can be classified as a nonlinear fractional program \cite{Dinkelbach}. For
description convenience, let $\mathbf{\Phi}$ denote the feasible domain defined by (4a)-(4f).
Let $q^*$ denote the maximum EE of D2D communications. Then, it can be defined as follows
\begin{equation}
q^* = \max\limits_{\{x_{i,j},P_{i,j}\}\in\mathbf{\Phi}}\ \frac{\sum\limits _{i=1} ^{N_d}\sum\limits _{j=1} ^{M}x_{i,j}R(P_{i,j})}{\sum\limits _{i=1} ^{N_d}\sum\limits _{j=1} ^{M}x_{i,j}P_{i,j}+P_C}
\end{equation}
We are now ready to present the following Theorem.
\begin{theorem}
The maximum EE $q^*$ is achieved if and only if
\begin{multline}
\max\limits_{\{x_{i,j},P_{i,j}\}\in\mathbf{\Phi}} \left\{\sum\limits _{i=1} ^{N_d}\sum\limits _{j=1} ^{M}x_{i,j}R(P_{i,j}) \right. \\
\left. -q^*\left(\sum\limits _{i=1} ^{N_d}\sum\limits _{j=1} ^{M}x_{i,j}P_{i,j}+P_C\right)\right\} =0
\end{multline}
\end{theorem}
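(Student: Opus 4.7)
The statement is the classical Dinkelbach result for nonlinear fractional programming, so the plan is to adapt the standard Dinkelbach argument to the present feasible set $\mathbf{\Phi}$ and to the particular numerator/denominator appearing in (5). The crucial structural fact that makes the argument go through is that the denominator $\sum_{i,j} x_{i,j} P_{i,j} + P_C$ is strictly positive on $\mathbf{\Phi}$ (because $P_C > 0$), and the numerator $\sum_{i,j} x_{i,j} R(P_{i,j})$ is nonnegative. I would prove the ``if'' and ``only if'' directions separately.

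For the ``only if'' direction, I would let $(x^*, P^*) \in \mathbf{\Phi}$ be an optimal pair attaining $q^*$ in (5). By the definition of $q^*$ as a maximum, every feasible $(x, P)$ satisfies
\begin{equation*}
\frac{\sum_{i,j} x_{i,j} R(P_{i,j})}{\sum_{i,j} x_{i,j} P_{i,j} + P_C} \le q^*,
\end{equation*}
and since the denominator is positive, I can cross-multiply to obtain $\sum_{i,j} x_{i,j} R(P_{i,j}) - q^* \bigl( \sum_{i,j} x_{i,j} P_{i,j} + P_C \bigr) \le 0$ for every $(x,P) \in \mathbf{\Phi}$. Evaluating the same expression at $(x^*, P^*)$ yields equality to zero, so the maximum of the subtractive form is exactly $0$ and is attained at the optimal pair.

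For the ``if'' direction, I would assume that the subtractive maximum equals zero and let $(x', P') \in \mathbf{\Phi}$ be a maximizer. Then $\sum_{i,j} x'_{i,j} R(P'_{i,j}) = q^* \bigl( \sum_{i,j} x'_{i,j} P'_{i,j} + P_C \bigr)$, which after dividing by the (positive) denominator gives EE $= q^*$ at $(x', P')$. For any other feasible $(x, P)$, the max-zero assumption gives $\sum_{i,j} x_{i,j} R(P_{i,j}) \le q^* \bigl( \sum_{i,j} x_{i,j} P_{i,j} + P_C \bigr)$, and another positive-denominator division shows that the EE at $(x, P)$ does not exceed $q^*$. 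Hence $q^*$ is indeed the maximum EE.

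I do not expect any serious obstacle: the only delicate point is the positivity of the denominator, which is immediate from $P_C > 0$, and the feasibility of the maximizer of the subtractive problem, which lies in $\mathbf{\Phi}$ by construction. The argument is purely algebraic and does not rely on convexity, differentiability, or any special structure of $R(\cdot)$ beyond nonnegativity, which is why Dinkelbach's transformation applies to the nonconvex, mixed-integer problem (4) directly.
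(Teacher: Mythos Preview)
Your argument is correct and is precisely the standard Dinkelbach proof. The paper does not give its own proof of Theorem~1 at all; it simply states that the proof ``is similar to the proof in \cite{Dinkelbach, Xiao1, Ng},'' so your write-up in fact supplies what the paper only references.
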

\begin{proof}
The proof of Theorem 1 is similar to the proof in \cite{Dinkelbach, Xiao1, Ng}.
\end{proof}

It can be seen from Theorem 1 and \cite{Dinkelbach, Xiao1, Ng} that  an equivalent objective function in subtractive form in (6) exists for the considered objective function in fractional form in (5) by exploiting the fractional programming from \cite{Dinkelbach}.
Consequently, we investigate the equivalent objective function in subtractive form in the rest of this paper.

\subsection{Iterative Algorithm for EE Maximization}
By exploiting an iterative algorithm known as the Dinkelbach method \cite{Dinkelbach}, the equivalent objective function in subtractive form in (6) can be solved.
Let $s$ denote the number of iterations, $q_s$ denote the temporary EE,
and $\epsilon$ denote the convergence threshold.
Then, the corresponding iterative resource allocation and power control algorithm can be summarized in Algorithm 1.

\begin{algorithm}[!b]
\caption{Iterative Resource Allocation and Power Control Algorithm}
\begin{itemize}
\item Step 1: Initialization: $s=1$, $q_{s}=0$.
\item Step 2: For the given $q_{s}$, solve the following optimization problem to obtain $x_{i,j}^{'}$ and $P_{i,j}^{'}$
\begin{multline*}
F(q_{s}) =\max\limits_{\{x_{i,j},P_{i,j}\}\in\mathbf{\Phi}}\left\{ \sum\limits _{i=1} ^{N_d}\sum\limits _{j=1} ^{M}x_{i,j}R(P_{i,j}) \right. \\
\left. -q_s\left(\sum\limits _{i=1} ^{N_d}\sum\limits _{j=1} ^{M}x_{i,j}P_{i,j}+P_C\right)\right\}
\end{multline*}
\item Step 3: Set $q_s=\frac{\sum\limits _{i=1} ^{N_d}\sum\limits _{j=1} ^{M}x_{i,j}^{'} R(P_{i,j}^{'})}{\sum\limits _{i=1} ^{N_d}\sum\limits _{j=1} ^{M}x_{i,j}^{'}P_{i,j}^{'}+P_C}$. If $|F(q_{s})|$ $\geq$ $\epsilon$, set $s=s+1$.
\item Step 4: Repeat the steps $2\thicksim3$ until $|F(q_{s})|$ $<$ $\epsilon$.
\end{itemize}
\end{algorithm}


In each iteration, by using the penalty function approach in \cite{Wang1}, the constraints (4d) and (4f) can be removed for the considered optimization problem in step 2 in Algorithm 1.
Assume that each D2D pair can measure the interference power on different RBs to obtain the feasible RB sets that satisfy the constraint (4e).
Let $\mathbf{\Phi^{'}}$ denotes the feasible domain defined by removing the constraints (4d)-(4f).
Then, the corresponding optimization problem can be written in the following equivalent form
\begin{multline}
 \max\limits_{\{x_{i,j},P_{i,j}\}\in\mathbf{\Phi^{'}}} \left\{\sum\limits _{i=1} ^{N_d}\left[\sum\limits _{j=1} ^{M}x_{i,j}R(P_{i,j}) \right. \right. \\
\left.\left. -q_s\sum\limits _{j=1} ^{M}x_{i,j}P_{i,j}\right] +\mathbf{\varphi}_1Z_1+\mathbf{\varphi}_2Z_2\right\}
\end{multline}
where
\begin{equation*}
Z_1=\sum\limits _{i=1} ^{N_d}\sum\limits _{j=1} ^{M}x_{i,j}\min \{0, 1+\frac{P_{i,j}h_{i,i}}{P_{k,j}h_{k,i}+N_0}-2^{\frac{\gamma_{i}}{W}}\}
\end{equation*}
\begin{equation*}
Z_2=\sum\limits _{i=1} ^{N_d}\sum\limits _{j=1} ^{M}x_{i,j}\min \{0,P_{\max}-P_{i,j}\}
\end{equation*}
$\mathbf{\varphi}_1$ and $\mathbf{\varphi}_2$ denote the penalty factors, and they  should be set as large as possible.
Then, (7) can be rewritten as
\begin{multline}
\max\limits_{\{x_{i,j},P_{i,j}\}\in\mathbf{\Phi^{'}}}\sum\limits _{i=1} ^{N_d}\left\{\sum\limits _{j=1} ^{M}x_{i,j}R(P_{i,j})-q_s\sum\limits _{j=1} ^{M}x_{i,j}P_{i,j} \right.\\
\left.+\mathbf{\varphi}_1\sum\limits _{j=1} ^{M}x_{i,j}\min \{0,1+\frac{P_{i,j}h_{i,i}}{P_{k,j}h_{k,i}+N_0}-2^{\frac{\gamma_{i}}{W}}\}\right. \\
\left. +\mathbf{\varphi}_2\sum\limits _{j=1} ^{M}x_{i,j}\min \{0,P_{\max}-P_{i,j}\}\right\}
\end{multline}
Define
\begin{multline*}
f_{i,j}(P_{i,j}) = R(P_{i,j})-q_s P_{i,j} \\
+\mathbf{\varphi}_1\min \{0,1+\frac{P_{i,j}h_{i,i}}{P_{k,j}h_{k,i}+N_0}-2^{\frac{\gamma_{i}}{W}}\} \\
+\mathbf{\varphi}_2\min \{0,P_{\max}-P_{i,j}\}
\end{multline*}
Then, (8) can be expressed as
\begin{equation}
\begin{array}{r@{~}l}
\max\limits_{\{x_{i,j},P_{i,j}\}\in\mathbf{\Phi^{'}}}\ \sum\limits _{i=1} ^{N_d}\sum\limits _{j=1} ^{M}x_{i,j}f_{i,j}(P_{i,j})
\end{array}
\end{equation}

\begin{figure*}[!b]
\hrulefill
\begin{align*}
&a=q_sh_{i,i}h_{i,B}\ln2\tag{17a}\\
&b=q_s\ln2[(2h_{i,i}h_{i,B}N_0+h_{i,i}h_{i,B}P_{k,j}h_{k,j})+h^{2}_{i,B}(P_{k,j}h_{k,i}+N_0)]-h^{2}_{i,B}h_{i,i}\tag{17b}\\
&c=q_s N_0\ln2[h_{i,i}(P_{k,j}h_{k,j}+N_0)+h_{i,B}(P_{k,j}h_{k,i}+N_0)]+q\ln2(P_{k,j}h_{k,i}+N_0)(P_{k,j}h_{k,j}+N_0) 
-2h_{i,i}h_{i,B}N_0\tag{17c}\\
&d=q_s N_0\ln2(P_{k,j}h_{k,i}+N_0)(P_{k,j}h_{k,j}+N_0)+h_{i,B}P_{k,j}h_{c_j}(P_{k,j}h_{k,i}+N_0)-h_{i,i}N_0(P_{k,j}h_{k,j}+N_0)\tag{17d}
\end{align*}
\end{figure*}

\subsection{The Novel Two-Layer Joint Resource Allocation and Power Control Scheme}

By taking advantage of the constraints (4a)-(4c), we introduce the following theorem.
\begin{theorem}\label{t2}
The optimal solutions of the power values $P_{i,j}$ of the optimization problem in (9) must be the optimal solutions of the optimization
problem $\mathop {\max }\limits_{{P_{i,j}}} {f_{i,j}}({P_{i,j}})$.
\end{theorem}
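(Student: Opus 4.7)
The plan is to exploit the fact that after the penalty-function reformulation, the feasibility set $\mathbf{\Phi^{'}}$ is defined by the constraints (4a)--(4c), which involve only the binary assignment variables $x_{i,j}$ and impose no restriction whatsoever on the power variables $P_{i,j}$. Moreover, each $P_{i,j}$ appears in the objective of (9) in exactly one summand, namely $x_{i,j} f_{i,j}(P_{i,j})$. This product-separable structure is the crux of the argument: the inner maximization over powers decouples across the pairs $(i,j)$ and from the combinatorial assignment.

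Concretely, I would fix an arbitrary optimal pair $(\{x_{i,j}^{*}\},\{P_{i,j}^{*}\})$ of (9) and denote, for each $(i,j)$, any scalar maximizer $P_{i,j}^{\dagger}\in\arg\max_{P_{i,j}} f_{i,j}(P_{i,j})$. I would then consider the perturbed solution obtained by replacing $P_{i,j}^{*}$ with $P_{i,j}^{\dagger}$ index by index. Feasibility is preserved because no constraint in $\mathbf{\Phi^{'}}$ touches the powers. The change in the objective contributed by index $(i,j)$ is $x_{i,j}^{*}\bigl[f_{i,j}(P_{i,j}^{\dagger})-f_{i,j}(P_{i,j}^{*})\bigr]$, which is nonnegative because $x_{i,j}^{*}\in\{0,1\}$ and $P_{i,j}^{\dagger}$ maximizes $f_{i,j}$. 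By optimality of $(\{x_{i,j}^{*}\},\{P_{i,j}^{*}\})$, the increment must be zero, so either $x_{i,j}^{*}=0$, in which case $P_{i,j}^{*}$ may be freely reset to $P_{i,j}^{\dagger}$ without affecting the objective, or $x_{i,j}^{*}=1$, in which case $P_{i,j}^{*}$ itself must already lie in $\arg\max_{P_{i,j}}f_{i,j}(P_{i,j})$. In both cases one obtains an optimal solution of (9) whose every power component solves $\max_{P_{i,j}} f_{i,j}(P_{i,j})$, which is the asserted claim.

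The main obstacle is conceptually minor but should be checked: one must be sure that $\arg\max_{P_{i,j}} f_{i,j}(P_{i,j})$ is nonempty, so that $P_{i,j}^{\dagger}$ exists. This follows from the presence of the penalty terms in $f_{i,j}$: with $\varphi_{2}$ chosen large enough, $\varphi_{2}\min\{0,P_{\max}-P_{i,j}\}$ drives $f_{i,j}$ to $-\infty$ as $P_{i,j}\to\infty$, while the rate term $R(P_{i,j})$ is continuous and bounded on $[0,P_{\max}]$, so continuity plus coercivity on $[0,\infty)$ guarantees that the supremum is attained. Once this is in place, the theorem legitimates the two-layer decomposition: the first layer can compute each $P_{i,j}^{\dagger}$ as the root of the derivative equation $f_{i,j}'(P_{i,j})=0$ (yielding the cubic whose coefficients are displayed in (17a)--(17d)), and the second layer can then solve the purely combinatorial assignment in the $x_{i,j}$ variables with the optimized $f_{i,j}(P_{i,j}^{\dagger})$ values as edge weights.
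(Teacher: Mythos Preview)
Your proof is correct and actually somewhat cleaner than the paper's, but the mechanics differ. The paper argues by first using (4a)--(4b) to rewrite the objective as $\sum_{i=1}^{N_d} f_{i,j(i)}(P_{i,j(i)})$ for some assignment map $i\mapsto j(i)$, and then invokes (4c) to conclude that distinct $i$'s give distinct $j(i)$'s, so the sum splits into $N_d$ independent one-variable maximizations. You instead run a direct exchange argument: freeze an optimal $(\{x_{i,j}^{*}\},\{P_{i,j}^{*}\})$, swap each power for a componentwise maximizer $P_{i,j}^{\dagger}$, observe feasibility is untouched because $\mathbf{\Phi^{'}}$ constrains only the $x$'s, and compare objectives. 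Your route is slightly more general---it never uses (4c), so it would go through even if several D2D pairs could share an RB---and you explicitly verify that $\arg\max f_{i,j}$ is nonempty via the coercivity induced by the $\varphi_2$ penalty, a point the paper leaves implicit. One small slip in your closing remarks: the cubic with coefficients (17a)--(17d) arises from $\theta_{i,j}$ (the case accounting for CU rate loss), not from $f_{i,j}$; for $f_{i,j}$ the stationarity condition yields the closed form in (14). This does not affect the proof itself.
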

\begin{proof}
By exploiting the constraints (4a) and (4b), the optimal solutions of the power values $P_{i,j}$ of the optimization problem in (9) must be the optimal solutions of the following equivalent optimization problem
\begin{equation}
\max\limits_{P_{i,j(i)}}\ \sum\limits _{i=1} ^{N_d}f_{i,j(i)}(P_{i,j(i)})
\end{equation}
where $j(i) \in \mathbb{M}$. According to the constraint (4c), if $i \ne i^{'}$, then $j(i) \ne j(i^{'})$. Consequently, the optimization problem in
(10) can be further expressed in the following equivalent $N_d$ independent optimization problems
\begin{equation}
\max\limits_{P_{i,j(i)}}\ f_{i,j(i)}(P_{i,j(i)}), \quad \forall i \in \mathbb{N}
\end{equation}
This completes the proof.
\end{proof}

It can be seen from Theorem \ref{t2} that the original two-dimensional nonconvex optimization problem can be transformed
into two separate optimization problems in each iteration.
Therefore, we propose a novel two-layer joint resource allocation and power control scheme to obtain the optimal solutions.
In the first layer, each D2D pair just needs to determine the maximum value of $f_{i,j}(P_{i,j})$ on the candidate RBs constrained by (4e), and
the corresponding power value can then be determined.
In the second layer, in order to maximize the EE and avoid the interference among D2D pairs, each RB will be allocated to one D2D pair
according to the results of the first layer.

In  the above discussions, the loss of CUs' rates caused by the D2D users is not taken into consideration. In the following,
both the case  without considering the loss of CUs' rates and that for considering it will be studied.
Define
\begin{multline}
\tilde R(P_{i,j})=W\log_{2}(1+\frac{P_{i,j}h_{i,i}}{P_{k,j}h_{k,i}+N_0}) \\
-W\left[\log_{2}(1+\frac{P_{k,j} h_{k,j}}{N_0}) \right. \\
\left. -\log_2(1+\frac{P_{k,j} h_{k,j}}{P_{i,j}h_{i,B}+N_0})\right]
\end{multline}
\begin{multline}
\theta_{i,j}(P_{i,j})=\tilde R(P_{i,j})-q_sP_{i,j} \\
+\mathbf{\varphi}_1\min \{0,1+\frac{P_{i,j}h_{i,i}}{P_{k,j}h_{k,i}+N_0}-2^{\frac{\gamma_{i}}{W}}\} \\
+\mathbf{\varphi}_2\min \{0,P_{\max}-P_{i,j}\} 
\end{multline}
where the second term in the right hand side of (12) denotes the loss of the $k$-th CU's rate, $h_{k,j}$ denotes the channel gain between the $k$-th CU and the BS on the $j$-th RB, and $h_{i,B}$ denotes the interference channel gain between the transmitter of the $i$-th D2D pair and the BS.
Considering the loss of CUs' rates, we just need to replace $R(P_{i,j})$ with $\tilde R(P_{i,j})$, and replace $f_{i,j}(P_{i,j})$ with $\theta_{i,j}(P_{i,j})$.
{We remark here that, although there is no constraint on the interference produced towards   CUs from D2D users due to $P_{i,j}$ and $h_{i,B}$ in the original EE optimization problem as formulated in (\ref{eq:parent}),
it has actually been constrained implicitly through the third term in the right hand side of (12) in the modified EE optimization problem in the case for considering the loss of  CUs' rates.}
{We also remark here that, by substracting the loss of the CU's rate from the rate of the considered D2D pair as shown in (12)
and replacing $R(P_{i,j})$ with $\tilde R(P_{i,j})$ in the EE of D2D communications as defined in (3),
both the influence of the resource to the rate of CUs and that to the EE of D2D users are taken into consideration in the modified EE optimization problem.
}

\subsubsection{ The First Layer (Power Control)}

When the loss of CUs' rates is not considered,
by taking derivative of $f_{i,j}(P_{i,j})$ with respect to $P_{i,j}$ and considering the constraints (4d) and (4f),
{the optimal transmit power $P_{i,j}^{*}$ can be obtained  as follows
\begin{multline}\label{Phiqq}
P_{i,j}^{*} = {\frac{W}{q_s\ln2}-\frac{P_{k,j}h_{k,i}+N_0}{h_{i,i}},} \\
\text{$(2^{\frac{\gamma_{i}}{W}}-1)\frac{P_{k,j}h_{k,i}+N_0}{h_{i,i}}\leq P_{i,j}\leq P_{\max}$}
\end{multline}
Note that if the above solution is less than the leftmost limit of the constraining interval in (\ref{Phiqq}), then it means that the minimum rate requirement cannot be satisfied to achieve the maximum EE.
If the above solution  is greater than the rightmost limit of the constraining interval in (\ref{Phiqq}), then it means that the maximum transmit power requirement cannot be satisfied to achieve the maximum EE.
If the leftmost limit is greater than the rightmost limit of the constraining interval in (\ref{Phiqq}), then it means that the minimum rate requirement cannot be satisfied even the maximum transmit power is used.
}

When the loss of CUs' rates is considered,
taking derivative of $\theta_{i,j}(P_{i,j})$ with respect to $P_{i,j}$ and considering the constraints (4d) and (4f), we have
\begin{multline}
\theta_{i,j}^{'}(P_{i,j}) = \frac{Wh_{i,i}}{\ln2(P_{i,j}h_{i,i}+P_{k,j}h_{k,i}+N_0)} \\
-\frac{WP_{k,j}h_{k,j}h_{i,B}}{\ln2(P_{i,j}h_{i,B}+P_{k,j}h_{k,j}+N_0)(P_{i,j}h_{i,B}+N_0)} \\
-q_s, \quad \text{$(2^{\frac{\gamma_{i}}{W}}-1)\frac{P_{k,j}h_{k,i}+N_0}{h_{i,i}}\leq P_{i,j}\leq P_{\max}$}
\end{multline}
Let $\theta^{'}_{i,j}(P_{i,j})=0$, i.e.,
\begin{multline}
\frac{Wh_{i,i}}{\ln2(P_{i,j}h_{i,i}+P_{k,j}h_{k,i}+N_0)} \\
-\frac{W P_{k,j}h_{k,j}h_{i,B}}{\ln2(P_{i,j}h_{i,B}+P_{k,j}h_{k,j}+N_0)(P_{i,j}h_{i,B}+N_0)} \\
-q_s=0
\end{multline}
Then, the maximum value of $\theta_{i,j}(P_{i,j})$ can be obtained. (16) can be written in the following equivalent cubic equation
\begin{equation}
aP^{3}_{i,j}+bP^{2}_{i,j}+cP_{i,j}+d=0 
\end{equation}
whose coefficients are shown in (17a)-(17d). Let
\begin{eqnarray*}
p^{N}_{i,j} &=&-b/3a \\
\delta &=&(b^2-3ac)/(9a^2) \\
\lambda^2 &=&3\delta^2 \\
h &=&-2a\delta^3 \\
y_N &=&(2b^3)/(27a^2)-(bc)/(3a)+d
\end{eqnarray*}
According to the value of $y_N$, there are three cases for the solution of (17) \cite{Nickalls}.
If $y_N>h^2$, one real root exists
\begin{eqnarray*}
\alpha_1 &=& p^{N}_{i,j}+(\frac{-y_N+\sqrt{y^2_N-h^2}}{2a})^{\frac{1}{3}}
\end{eqnarray*}
If $y_N = h^2$, two real roots exist
\begin{eqnarray*}
\alpha_2 &=& p^{N}_{i,j}+\delta \\
\beta_2 &=& p^{N}_{i,j}-2\delta
\end{eqnarray*}
If $y_N < h^2$, three real roots exist
\begin{eqnarray*}
\alpha_3 &=& p^{N}_{i,j}+2\delta\cos\rho \\
\beta_3 &=& p^{N}_{i,j}+2\delta\cos(\frac{2\pi}{3}-\rho) \\
\gamma_3 &=& p^{N}_{i,j}+2\delta\cos(\frac{2\pi}{3}+\rho)
\end{eqnarray*}
where $\rho=\frac{1}{3} \arccos(-y_N/h)$. With all the possible candidate optimal power values for the three cases, the optimal transmit power $P^{*}_{i,j}$ can be readily obtained.

\subsubsection{ The Second Layer (Resource Allocation)}
Let $\xi_{i,j}(P_{i,j}^{*})=f_{i,j}(P_{i,j}^{*})$ in the case  without considering the loss of CUs' rates.
Let $\xi_{i,j}(P_{i,j}^{*})=\theta_{i,j}(P_{i,j}^{*})$ in the case for considering the loss of CUs' rates.
Then, the optimization problem in the second layer can be shown as follows
\begin{subequations}
\begin{align}
& \quad  \max\limits_{x_{i,j}} \sum\limits _{i=1} ^{N_d}\sum\limits _{j=1} ^{M}x_{i,j}\xi_{i,j}(P_{i,j}^{*}) \tag{18} \\
\text{s.t.} &\  x_{i,j}\in\{0,1\},\quad\forall{i}\in\mathbb{N},\forall{j}\in\mathbb{M}\\
&\ \sum\limits _{j=1} ^{M}x_{i,j}=1,\quad\forall{i}\in\mathbb{N}\\
&\ \sum\limits _{i=1} ^{N_d}x_{i,j}\leq1,\quad\forall{j}\in\mathbb{M}
\end{align}
\end{subequations}
{It can be seen that the above optimization problem  is  a classical resource allocation problem with single allocation format,
which admits a network flow formulation so that it can be solved to optimality \cite{AbrardoBelleschi},
and it also belongs to the assignment problem,
which can be solved by Hungarian method or brand and bound method \cite{Wolsey}.
}

Note that our proposed two-layer joint resource allocation and power control scheme can achieve the optimal solution of the original mixed
combinatorial and nonconvex optimization problem, and that the original two-dimensional optimization problem is also transformed to two
separate optimization problems which can be solved with low complexity.
Clearly, a centralized approach has been assumed in this paper.
This would certainly incur the necessary signaling overhead,
and the impact of such overhead will be an interesting issue for future research.

\section{Simulation Results}

In this section, the performance of the proposed joint resource allocation and power control scheme is evaluated via simulations. The EE per Hz is considered in the simulation.  The corresponding simulation parameters  \cite{Feng} are shown as follows: the cell radius is $500$m,
the pathloss constant $\kappa=10^{-2}$, the pathloss exponent $\chi = 4$, 
the standard deviation of
the shadowing is 8dB, the noise spectral density is -174dBm/Hz, $P_{\max} =  0.1$W, $N_c = 16$, $M = 16$, $\varepsilon = 0.0001$.
{In the following, for description convenience, we set the average received powers at the BS from the CUs to be the same, i.e., $P_{k,B} = P_B, \forall k \in \mathbb{L}$.
In a real scenario, different users may have different traffics and different QoS, and cell edge users in general are assigned lower target received powers.
However, we know that the locations of  CUs in the cell are generally different, and that the distances and the channel conditions between  CUs and the BS are almost always different.
According to (1), the transmit powers of the CUs, $P_{k,j}$, are generally different no matter whether the average received powers at the BS, $P_{k,B}$, are at the same level or not.
Correspondingly, the interference from  CUs towards  D2D users is different, too.
Therefore, whether the received power levels at the BS from  CUs are the same or not has no impact on the proposed joint resource allocation and power control scheme.}

In Figure \ref{fig2}, we show the performance comparison between the proposed scheme and the schemes in \cite{Yu1} and \cite{Xiao}. The scheme in \cite{Yu1} is rate adaptive and the scheme in \cite{Xiao} is margin adaptive. In \cite{Yu1}, maximizing throughput is equivalent to solve the optimal solution of a binary function. In \cite{Xiao}, the next bit is allocated to the subcarrier whose power increment in each loop is minimal.
Studying the curves in Figure \ref{fig2}, we can observe that the proposed scheme, which optimizes the measure used,  can greatly improve the EE of D2D communications  compared with the schemes in \cite{Yu1} and \cite{Xiao}, which optimize different measures.
Note here that the complexity of the proposed scheme is $\mathcal {O} (N_d M)$, and the complexities of the schemes in  \cite{Yu1} and \cite{Xiao}
are $\mathcal {O}  (N_d )$ and $\mathcal {O}  (M\sum\limits_{i = 1}^{{N_d}} {{R_i}})$, where $R_i$ denotes the required bits to be transmitted by the $i$-th D2D pair. We can see that our proposed scheme has polynomial complexity in regard to problem
scale $N_d$ and $M$, which facilitates the practical implementation.
It can also be observed from the figure that the EE of all the considered schemes decreases slightly with  the increase of $P_B/N_0$, which is due to the increased interference power from  CUs to D2D pairs. Another interesting observation is that the value of $P_B/N_0$  affects the EE to a small degree, and the reason is that   the allowed interference
from the considered CU sharing the same resource with the corresponding D2D pair is constrained to be smaller than a certain threshold.

\begin{figure}[!b]
\centering 
\includegraphics[width=0.45\textwidth]{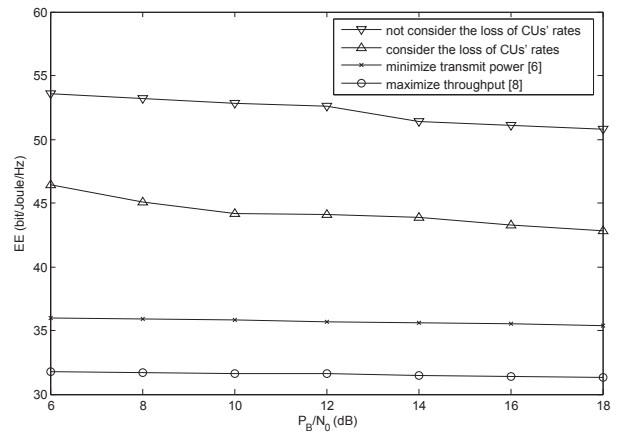}
\captionstyle{mystyle2}
\caption{EE versus $P_B/N_0$ with the considered different schemes.}
\label{fig2}
\end{figure}

\begin{figure}[!t]
\centering 
\includegraphics[width=0.45\textwidth]{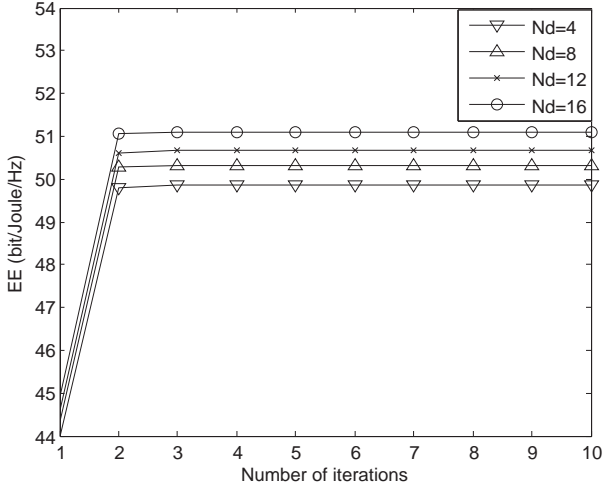}
\captionstyle{mystyle2}
\caption{EE versus number of iterations with different $N_d$.}
\label{fig3}
\end{figure}

In Figure \ref{fig3}, the EE of the proposed scheme versus the number of iterations with different $N_d$ is illustrated. It can be observed that the EE increases with the number of D2D pairs. It is obvious that the larger the number of  D2D pairs, the larger the EE of the proposed scheme.
It can also be observed that only 2 iterations are required to converge to the optimal EE for the considered simulation scenario.

In Figure \ref{fig4}, we show how the distance between the users in one D2D pair affects the EE  of  the proposed scheme. It can be observed that the EE decreases with the increasing of the distance between the D2D users. The reason is that the fading increases with the distance between the D2D users. As the distance between the  users in one D2D pair increases from 10m to 100m, the EE decreases about 66.7\%. It is obvious that the distance between the D2D users has a great influence on the performance of the proposed scheme.

In Figure \ref{fig5}, we illustrate the EE of the proposed scheme versus $\gamma_i$ (the minimum rate requirement of each D2D pair) with different $N_d$. It can be observed that there exists a saturation point beyond which the EE no longer increases with $\gamma_i$, and it shows that the minimum rate requirement influences the optimal transmit power value. This observation is helpful for the optimal energy efficient designs. We would like to remark here that the relationship between the minimum rate requirement and the EE {of D2D users} is coincident with that between the transmission rate and the EE {of D2D users} for a given system.
There is a certain relationship between the transmission rate and the EE {of D2D users}. The EE {of D2D users} is defined as the ratio of the transmission rate of all D2D users and the total energy consumption. The transmission rate {of D2D users} is the sum of {$\log$} functions, and the incremental rate of {$\log$} function decreases with the increase of the {D2D} transmit power. However, the incremental rate of the energy consumption is unchanged with the {D2D} transmit power.
{Correspondingly, when the incremental rate of  {$\log$} function is greater than that of the energy consumption, the EE of D2D users increases with the D2D transmit power.
Otherwise, the EE of D2D users decreases with the D2D transmit power.
Therefore, the EE of D2D users first increases then decreases with the D2D transmit power and also the transmission rate of D2D users.}
{We can see from the third term in the right hand side of (12) that the transmission rate of the considered CU by considering the influence of the corresponding D2D pair obviously decreases  with the D2D transmit power.
Combined with the above analytical results, the relationship between the EE of D2D users and the transmission rate of CUs can be readily established.
Clearly, there exists a tradeoff between the EE of D2D users and the transmission rate of CUs.
If we focus on the EE of D2D users, the optimum $P_{i,j}$ can be chosen, which incurs certain rate loss to  CUs.
On the other hand, if we give priority to the transmission rate of CUs, $P_{i,j}$ should be selected to be as small as possible, which may result in a lower EE for  D2D users.
}

\begin{figure}[!t]
\centering 
\includegraphics[width=0.45\textwidth]{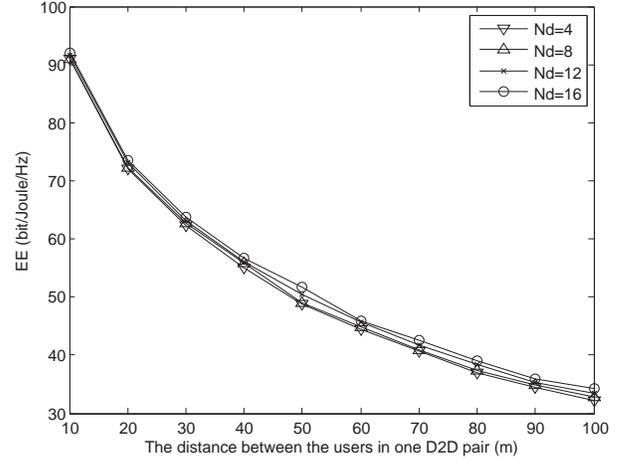}
\captionstyle{mystyle2}
\caption{EE versus the distance between the users in one D2D pair with different $N_d$.}
\label{fig4}
\end{figure}

\begin{figure}[!t]
\centering 
\includegraphics[width=0.45\textwidth]{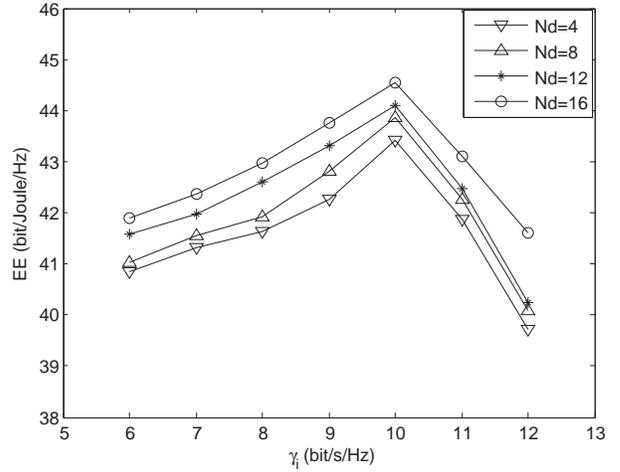}
\captionstyle{mystyle2}
\caption{EE versus $\gamma_i$ with different $N_d$.}
\label{fig5}
\end{figure}

\section{Conclusions}

In this paper, we have formulated the resource allocation and power control for energy efficient D2D communications
underlaying cellular networks as a  nonconvex optimization problem.
By exploiting the properties of fractional programming and penalty function,
an efficient iterative joint resource allocation and power control scheme has been derived to maximize the EE of D2D communications.
Simulation results have shown that the proposed scheme 
achieves remarkable improvements in terms of  EE. 

\bibliographystyle{IEEEtran}

\vspace*{-15pt}
\vspace*{-2\baselineskip}

\begin{biography}[{\includegraphics[width=1in,height
=1.25in,clip,keepaspectratio]{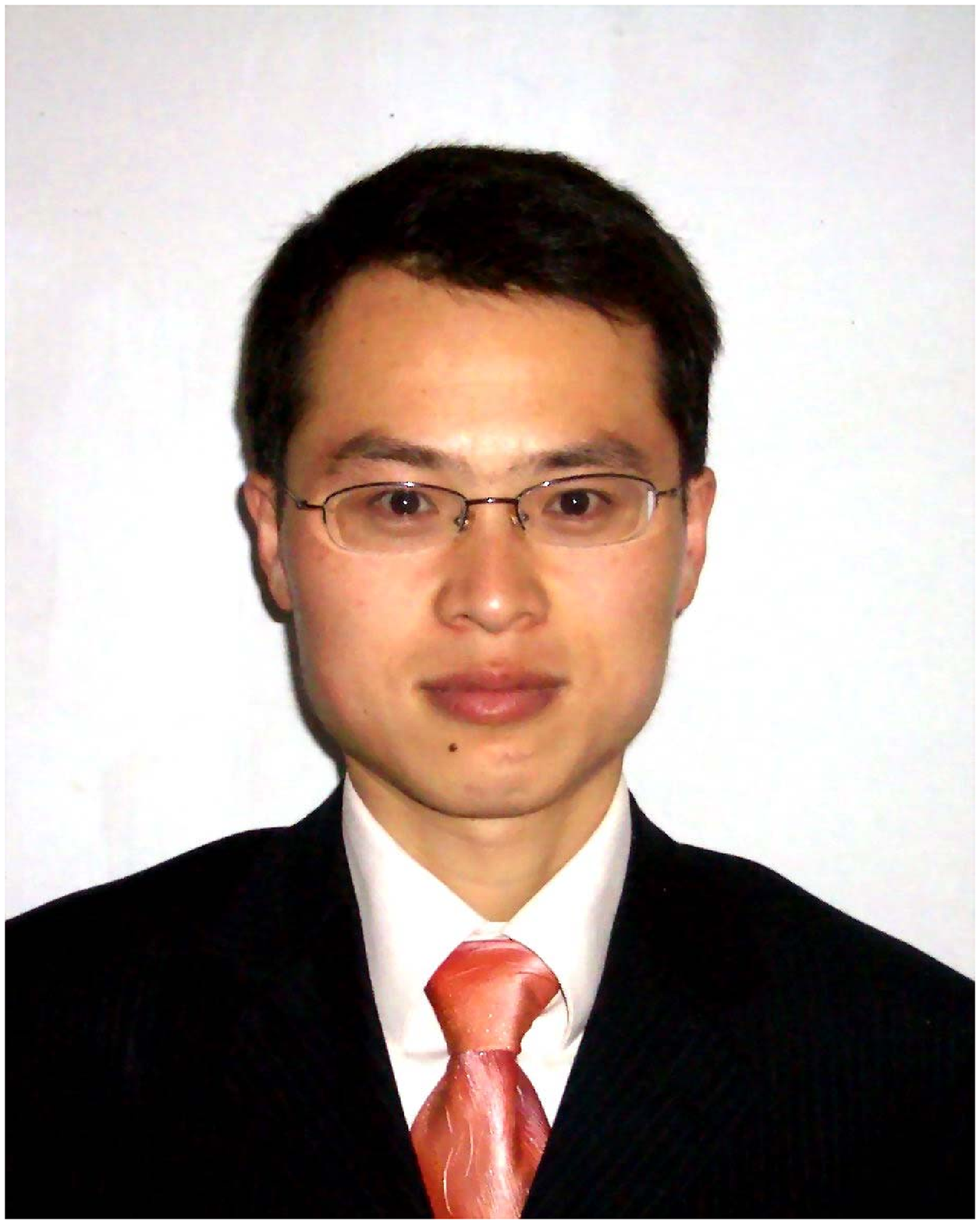}}]
{Yanxiang Jiang (S'03-M'07)}
received the B.S. degree in electrical engineering from Nanjing University, Nanjing, China, in 1999,
and the M.S. and Ph.D. degrees in communications and information systems from Southeast University, Nanjing, China, in 2003 and 2007, respectively.
From December 2013 to December 2014, he was a Visiting Scholar at the Department of Electrical and Computer Engineering,
University of Maryland, College Park, MD, USA.
He is currently with the faculty of the National Mobile Communications Research Laboratory at Southeast University, China.
His research interests include 5G mobile communication systems, massive MIMO systems, and green communication systems.
\end{biography}

\vspace*{-2\baselineskip}

\begin{biography}[{\includegraphics[width=1in,height
=1.25in,clip,keepaspectratio]{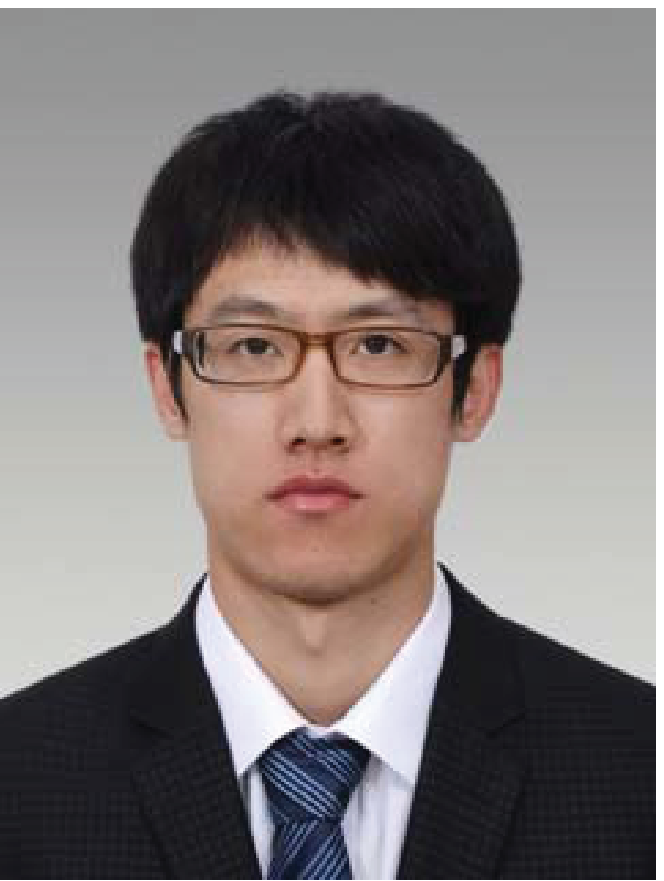}}]
{Qiang Liu}
received the M.S. degree in communications and information systems from Southeast University, Nanjing, China, in 2014.
He joined the Tangshan Branch, Hebei Co., Ltd., China Mobile Communications Group, in 2014.
His research interests include device-to-device systems, and mobile communication systems.
\end{biography}

\vspace*{-2\baselineskip}

\begin{biography}[{\includegraphics[width=1in,height
=1.25in,clip,keepaspectratio]{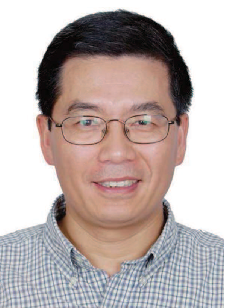}}]
{Fuchun Zheng (M'95-SM'99)}
obtained the BEng (1985) and MEng (1988) degrees in radio engineering
from Harbin Institute of Technology, China, and
the PhD degree in Electrical Engineering from the
University of Edinburgh, UK, in 1992.
From 1992 to 1995, he was a post-doctoral research associate with
the University of Bradford, UK, Between May 1995
and August 2007, he was with Victoria University,
Melbourne, Australia, first as a lecturer and then
as an associate professor in mobile communications.
He joined the University of Reading, UK,
in September 2007 as Professor (Chair) of Signal Processing. He has been
awarded two UK EPSRC Visiting Fellowships-both hosted by the University
of York (UK): first from August 2002 to July 2003 and then from August
2006 to July 2007. Over the past 15 years, Dr. Zheng has also carried out
and managed many industry-sponsored projects. He has been both a short
term visiting fellow and a long term visiting research fellow with British
Telecom, UK. Dr. Zheng's current research interests include signal processing
for communications, multiple antenna systems, and green communications.
He has been an active IEEE member since 1995. He was an editor (2001-
2004) of IEEE Transactions on Wireless Communications. In 2006, Dr. Zheng
served as the general chair of IEEE VTC 2006-Spring in Melbourne, Australia
(http://ieeevtc.org/vtc2006spring/)-the first ever VTC held in the southern
hemisphere. He received a VTC Chair Award from the IEEE VT Society
at IEEE VTC 2009-S (Barcelona, Spain) in April 2009. He will be the TPC
chair for VTC 2016-S in Nanjing (the first VTC in mainland China).

\end{biography}

\vspace*{-2\baselineskip}

\begin{biography}[{\includegraphics[width=1in,height
=1.25in,clip,keepaspectratio]{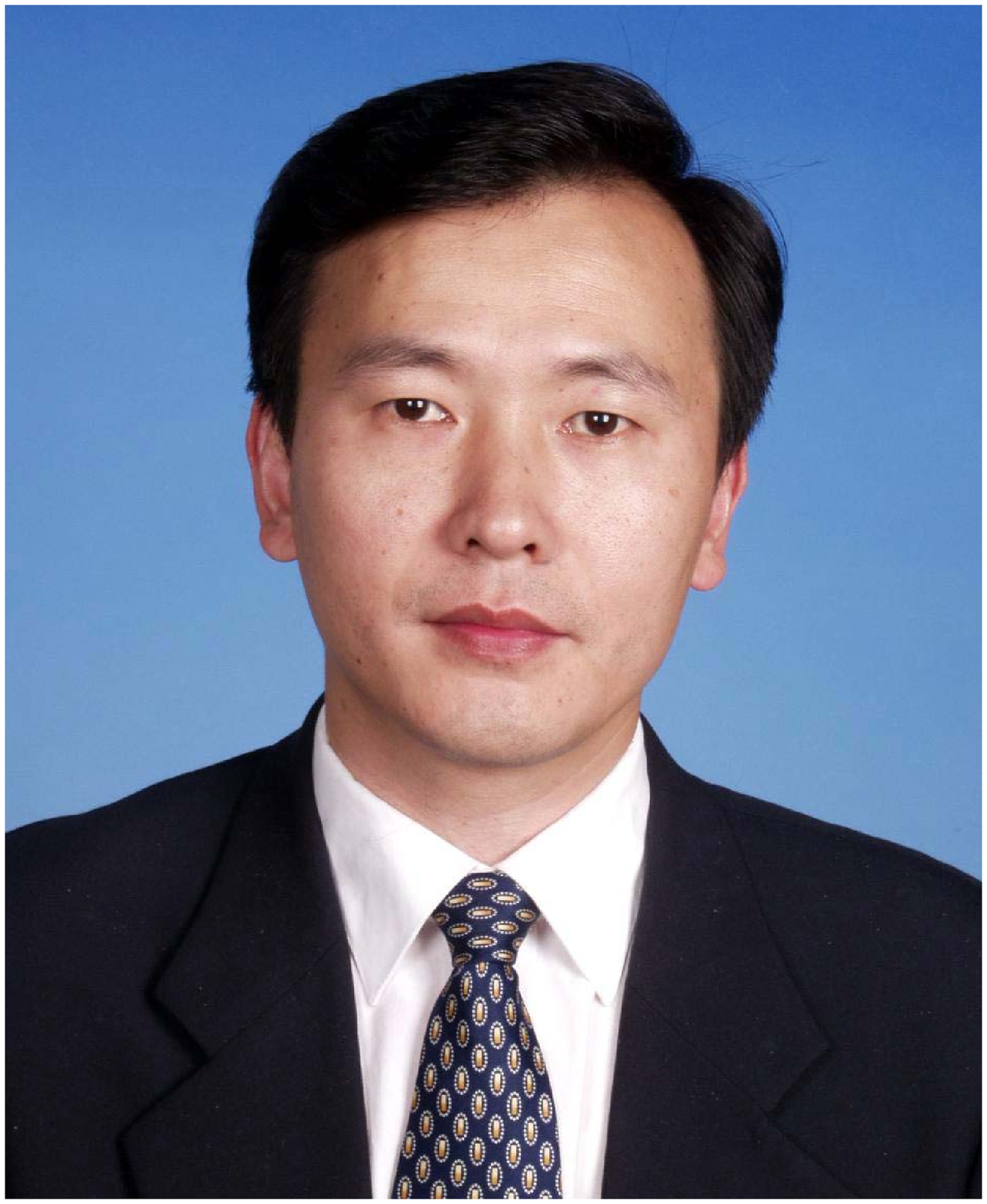}}]
{Xiqi Gao (SM'07-F'15)} received the Ph.D. degree
in electrical engineering from Southeast University,
Nanjing, China, in 1997. In April 1992, he joined the
Department of Radio Engineering, Southeast University,
where he has been a Professor of information
systems and communications since May 2001. From
September 1999 to August 2000, he was a Visiting
Scholar at Massachusetts Institute of Technology,
Cambridge, MA, USA, and Boston University,
Boston, MA. From August 2007 to July 2008, he visited
Darmstadt University of Technology, Darmstadt,
Germany, as a Humboldt Scholar. His current research interests include broadband
multicarrier communications, MIMO wireless communications, channel
estimation and turbo equalization, and multirate signal processing for wireless
communications.

Dr. Gao served as an Editor of the IEEE TRANSACTIONS ON WIRELESS
COMMUNICATIONS from 2007 to 2012. From 2009 to 2013, he served as an
Editor of the IEEE TRANSACTIONS ON SIGNAL PROCESSING. He is currently
serving as an Editor of the IEEE TRANSACTIONS ON COMMUNICATIONS. He
received the Science and Technology Awards of the State Education Ministry
of China in 1998, 2006, and 2009; the National Technological Invention
Award of China in 2011; and the 2011 IEEE Communications Society Stephen
O. Rice Prize Paper Award in the field of communications theory.

\end{biography}

\vspace*{-2\baselineskip}

\begin{biography}[{\includegraphics[width=1in,height
=1.25in,clip,keepaspectratio]{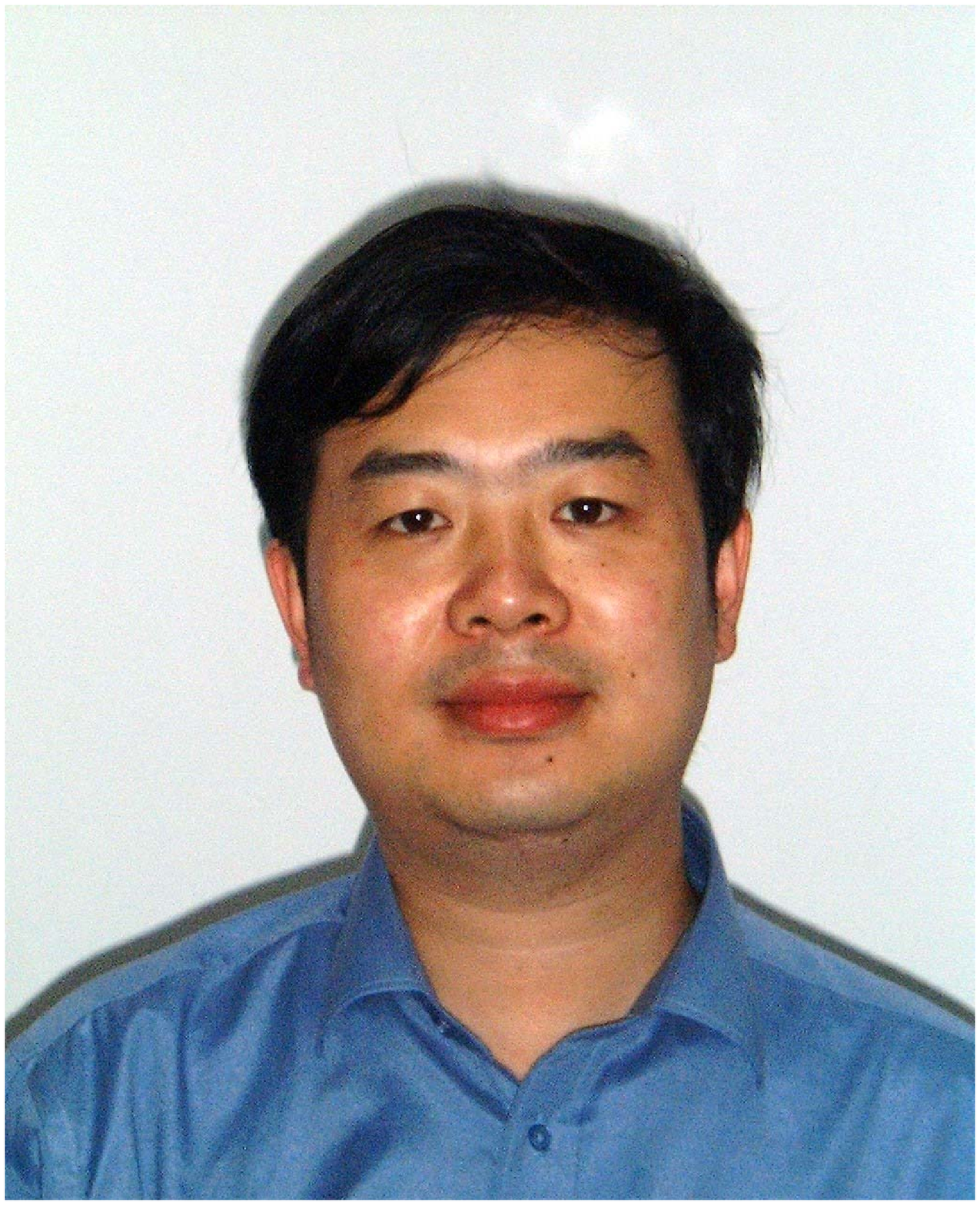}}]
{Xiaohu You (SM'11-F'12)} received the B.S., M.S., and Ph.D. degrees in electrical engineering
from Southeast University, Nanjing, China, in 1982, 1985, and 1988,
respectively. Since 1990, he has been working with National Mobile Communications
Research Laboratory at Southeast University, where he holds the
ranks of professor and director. He is the
Chief of the Technical Group of China 3G/B3G
Mobile Communication R$\&$D Project. His research
interests include mobile communications, adaptive
signal processing, and artificial neural networks, with applications to communications
and biomedical engineering.

Dr. You was a recipient of the Excellent Paper Prize from the China Institute
of Communications in 1987; the Elite Outstanding Young Teacher award from
Southeast University in 1990, 1991, and 1993; and the National Technological
Invention Award of China in 2011. He was also a recipient of the 1989 Young
Teacher Award of Fok Ying Tung Education Foundation, State Education
Commission of China.

\end{biography}

\end{document}